\newtheorem{theorem}{Theorem}[section]
\newtheorem{lemma}[theorem]{Lemma}
\theoremstyle{remark}
\newtheorem{remark}[theorem]{Remark}
\theoremstyle{definition}
\newtheorem{definition}[theorem]{Definition}
\newcommand{\dd}{\,\mathrm{d}}
\title{Higher Order Minimum Entropy Approximations in Radiative Transfer}
\author{Philipp Monreal}
\author{Martin Frank}
\address{Dept.\ of Mathematics\\Technical University of Kaiserslautern\\Erwin-Schr\"odinger-Str. \\67663 Kaiserslautern\\Germany}
\email{monreal@mathematik.uni-kl.de}
\email{frank@mathematik.uni-kl.de}
\begin{document}

\maketitle

\begin{abstract}
In this paper we approximate the radiative transfer equations by the method of moments, constructing mesoscopic approximations of arbitrary order of the otherwise microscopic system. To define the necessary closure a minimum entropy approach is utilized. While in radiative transfer, the minimum entropy closure for moment systems up to the first-order moment is well known, higher-order minimum entropy closures have not been implemented. This is probably due to the fact that the closure cannot be expressed in analytical form. 
Our focus thus lies in developing some general results about the minimum entropy system and in deriving a numerical closure. 
By extending to higher order, among increasing the precision, we are able to overcome difficulties that arise for the first order minimum entropy method. Numerical experiments in a 1-dimensional domain irradiated by two beams or with internal source show the accuracy of this approach.
\end{abstract}

\section{Introduction}
Radiation therapy has been in use for treatment of cancer for over a hundred years and has become one of its most important means. To predict the radiation dose, simulations solving an approximation of the radiative transfer equations are used. The radiative transfer equations are a system of integro-differential equations constituting a balance law for a radiation field in a medium. Currently, most applications in clinical use utilize so-called pencil-beam models \cite{ahnasp99}. However, they have fundamental problems handling inhomogeneities of the tissue \cite{krisau05,cygbatscrmahant87}, e.g.\ air cavities in lungs or the head, which are extremely vulnerable to irradiation. The most accurate methods of calculating the amount of absorbed radiative energy use Monte-Carlo models \cite{and91}. However, the necessary computation time often exceeds the capabilites in clinical environments.

A third approach is given by the method of moments. Originally developed by H. Grad in the context of rarefied gases \cite{grad49}, through integration over the directional variable one averages over all directions, introducing the moments of the particle distribution. A transition in scale is achieved: the mesoscopic system becomes macroscopic, which is a reasonable simplification, since we are interested in the radiation dose and not the exact distribution of all particles. By expanding into moments of arbitrary order a hierarchy of moment systems can be defined \cite{dubrofeu99}. A closure for the resulting system has to be defined in order to make it solvable. We choose minimum entropy closures which enforce that the distribution maximizes the physical entropy. The mathematical entropy to be minimized, hence the title, is just the physical one with a minus sign. This approach is inspired by the fact that physical systems always tend to increase the entropy. It has become the main concept of rational extended thermodynamics \cite{mulrug93}. Jaynes has shown that the entropy-minimizing distribution is indeed the most probable one \cite{jay57}. Subsequently, much work in this field has been done by Levermore, who proposed a closure ensuring the hyperbolicity of the system and its ability to dissipate the entropy locally \cite{levermore83,levermore96}. The moment system of order n, closed using the minimum entropy principle, is termed $\mathcal{M}_n$.

Unfortunately, higher order minimum entropy closures are not explicitly expressible, but have to be calculated numerically, which might be the main reason why they have not attracted much attention in practice so far. To describe the closure the so-called Eddington factor is used. By studying it, we will be able to gain insight into the system's behaviour, e.g.\ its ability to handle radiative non-equilibria. The one major drawback of the first order minimum entropy method $\mathcal{M}_1$ is that it cannot handle situations when a net particle flux of zero occurs. As will be shown, the $\mathcal{M}_2$ model is able to fix this problem. 

The remainder of this paper is organized as follows: in section \ref{chap:model_description} we introduce the underlying transport equations for electrons and photons separately and describe the method of moments, after explaining the physical background. In section \ref{chap:me_closure} we study the minimum entropy closure in general and the Eddington factor in detail to construct the closure of our system. The Eddington factor has to be calculated numerically, and we present on e method in section \ref{chap:computation_eddi}. Section \ref{chap:properties_Mn} deals with properties of different moment systems resulting from the method of moments and a minimum entropy closure. Finally, in section \ref{chap:num_results} numerical results of simulations of the transport equations are shown and discussed. 

\section{Model Description}
\label{chap:model_description}
\subsection{Physical background}
In radiation therapy, beams of protons, electrons or photons are used. The common point of interest is radiation dose, i.e.\ the amount of energy deposited in the tissue. Referring to interactions particles undergo within the body, we will distinguish between 3 different types: elastic scattering, inelastic scattering and absorption. Absorption is the process when a particle loses its energy completely. The former type of scattering, elastic scattering, describes processes in which a particle's energy remains unchanged, only the direction of flight changes. The latter one, inelastic scattering, characterizes the case where the particle loses a considerable amount of energy, oftentimes increasing the number of particles that contribute to their flux. The probability that a certain particle, interacting with an atom, is scattered by a certain angle and gains/loses a certain amount of energy is given by the differential scattering cross section, also called scattering kernel. As it is in our situation by far more probable that the deflection angle is very small, we call the scattering kernel forward-peaked. It is important to check whether the model is able to handle this anisotropy. In an opaque medium, the radiation is assumed to be near isotropic and macroscopic diffusion models can be used \cite{mihalas84,pom92}. In a transparent medium however, the radiation becomes strongly anisotropic where microscopic methods like Monte-Carlo simulations are used. As the radiation transfer in our present case belongs to the transitional regime between these two extremes, we try to find a model which is macroscopic but nevertheless capable of dealing with those non-equilibria. The energy of our particles is the relativistic kinetic energy in the case of protons or electrons and $E = h \cdot \nu$ in case of photons. A fundamental assumption in this work is that we only consider media stationary in space, since otherwise the transfer equation is coupled with an equation describing the fluid motion, as particles now would have a prefered traveling direction. This is reasonable, since it can be assumed that the patient remains stationary while the treatment is taking place.

\subsection{Transport equations}
In this section we state the transport equations used to describe the transport of electrons or that of photons moving through a medium. Let us start by introducing some fundamental notations: let $x$ be the spatial, $t$ be the time and $\Omega$ be the directional variable (in a spherical coordinate system): $\Omega = (\mu, \sqrt{1-\mu^{2}}\cos\varphi, \sqrt{1-\mu^2}\sin\varphi)^T$. $\epsilon$ describes the energy. It is also possible to consider the problem in slab (or planar) geometry: assume the medium has infinite length in two dimensions but only finite length in the third one. Imagine parallel slabs of infinite diameter opposite to each other and a perpendicular beam crossing them. That means the (1D-)spatial variable can be identified as the penetration depth and the directional variable $\Omega$ collapses to $\mu$, the cosine of the angle between direction of flight and the beam. In this case our system becomes rotational symmetric.

\subsubsection{Photon transport}
We start with the linear Boltzmann transport equation (also called radiative transfer equation in this context) 
\begin{equation}
\label{eq:transport_photon}
\frac{1}{c} \partial_t \psi(x,\Omega,\epsilon,t) + \Omega \cdot \nabla \psi(x,\Omega,\epsilon,t) =  L_P \psi(x,\Omega,\epsilon,t) + Q(x, \Omega, \epsilon,t),
\end{equation}
where we define the photon scattering operator $L_P$ as
\begin{align}
L_P\psi(x,\Omega,\epsilon,t) &:= \kappa(x) (B(T)-\psi(x,\Omega,\epsilon,t)) \\
&+ \int \limits_{S^2} \sigma(x,\Omega' \cdot \Omega,\epsilon) \cdot \psi(x,\Omega',\epsilon,t) \dd\Omega' - \psi(x,\Omega,\epsilon,t). \nonumber
\end{align}
The energy is computed according to $\epsilon = h \cdot \nu$ with frequency $\nu$ and the Planck constant $h$. Denote the number of photons by $f$.
Additionally, we define the specific intensity 
$\psi(x,\Omega,\epsilon,t) = c\epsilon\, f(x,\Omega,\epsilon,t),$
which has the physical interpretation that $\psi(x,\Omega,\epsilon,t)\cos(\theta)\dd\epsilon \dd\Omega \dd A \dd t$ describes the amount of radiant energy in the interval $(\epsilon,\epsilon+\dd\epsilon)$ traveling in time $\dd t$ through area $\dd A$ into the element of solid angle $\dd\Omega$ around $\Omega$, where $\theta$ is the angle between $\Omega$ and the normal of the area $\dd A$. The specific intensity describes the radiation field inside the medium. We still need to define other parameters: the absorption coefficient, which tells us that a photon, traveling a distance $\dd s$ is absorbed with probability $\kappa(x,\epsilon)\dd s$. Note that absorption and emission share the same coefficient. $\sigma(x,\Omega' \cdot \Omega,\epsilon)$ is the differential scattering cross section describing the probabilty that a photon at point $x$ with energy $\epsilon$ moving in direction $\Omega'$ is scattered to direction $\Omega$.

The radiative transfer equation is a balance law for the conservation of energy. The left hand side is a transport part which describes how photons travel through the medium and the right hand side is a source term accounting for contributions to the radiation field. The integral appears due to the scattering of photons into our beam (in-scattering) and the last term describes the number of photons scattered out of our beam (out-scattering). More specifically, we assume the medium to be in local thermodynamic equilibrium (LTE), which allows us to describe the emission of photons inside the medium by Planck's distribution function. This is an example of a homogeneous and isotropic field, describing the radiation of a perfectly black body in thermodynamic equilibrium at temperature $T$,
\begin{equation}
\psi = B(T) := \frac{2\epsilon^3}{h^2c^2} \left (\exp(\frac{\epsilon}{kT})-1 \right )^{-1}.
\end{equation}
Here $c$ is the speed of light and $k$ is the Boltzmann constant.

Keep in mind though, that the radiative transfer equation is in itself only an approximate description of the propagation of electro magnetic radiation through matter. Two aspects are neglected in this description: firstly, the state of polarization of the electro magnetic field is not taken into account and secondly, photons are treated as particles, neglecting their wave-like behavior. 

\subsection{Method of moments}
\label{sec:methodofmoments}
In this section we investigate the method of moments, which allows us to solve the transfer equation (\ref{eq:transport_photon}), by expanding it into a coupled system of partial differential equations independent of the angular variable, which we will in turn solve numerically. This process can be thought of as averaging over all directions. As we will see, this will lead to a closure problem which we will tackle in section \ref{chap:me_closure}.

A possible motivation for this method is as follows: the transfer equation is a mesoscopic equation, between micro- and macroscopic, describing the exact distribution of all photons in space and time. But in radiation therapy we are in general only interested in macroscopic quantities like the energy density. Hence, it makes sense to try to achieve this transition in scale by integration. Additionally, we can interpret the transfer equation as an infinite system of equations, one for each direction, which we want to replace by a finite number of equations.

From now on, we shall use the following notation 
\begin{definition}
\begin{align}
\langle \cdotp \rangle : = \int \limits_0^\infty \int \limits_{S^2} \cdotp \dd\Omega \dd\epsilon
\end{align}
and introduce the term \textit{moment} mathematically:
we call 
\begin{equation}
\psi^{(i)} := \langle \Omega^i \psi \rangle
\end{equation}  
the \textit{$i^\text{th}$ moment} of $\psi$, with 
\begin{equation}
 \Omega^{0} := 1, \quad \Omega^{1} := \Omega \text{  and} \quad \Omega^{i} := \Omega \otimes \overset{i}{\ldots}  \otimes \Omega.
\end{equation}  
Note that $\Omega^i$ is a tensor of $i^\text{th}$ rank and the integration is done component-wise. Thus $\psi^{(i)}$ is also a tensor of $i^\text{th}$ rank and an element of $\mathbb{R}^{3 \times \overset{i}{\ldots} \times 3 }$.
\end{definition}
\begin{remark}
The zeroth, first and second angular moment of the distribution are called energy density, radiative flux or radiative pressure respectively.
\end{remark}
\begin{remark}
Averaging over all energies leads to so-called grey approximations.
\end{remark}
\begin{definition}
Let us denote by $\tilde{\Omega}^i_k$ the linearly independent entries of $\Omega^i$, where $1 \leq k \leq \frac{i^2}{2}+\frac{3}{2}i+1$.
Now $m(\Omega)$ is defined as the vector of all $\tilde{\Omega}^i_k \, \forall i\leq n \, \forall k$
\begin{equation}
m(\Omega) := (1, \tilde{\Omega}^1_1, \ldots, \tilde{\Omega}^1_{3},\tilde{\Omega}^2_1,\ldots,\tilde{\Omega}^n_{\frac{n^2}{2}+\frac{3}{2}n+1})^T.
\end{equation}
Similarly, we denote by $E$ the vector of the linearly independent entries of the first $n$+1 moments:
\begin{equation}
E := (\langle \psi \rangle, \langle \tilde{\Omega}^1_1 \psi \rangle,\ldots, \langle \tilde{\Omega}^n_{\frac{n^2}{2}+\frac{3}{2}n+1} \psi \rangle)^T.
\end{equation}
\end{definition}
In general, to construct a $n^\text{th}$ order moment system ($n \geq 0$), one calculates all moments of up to order $n$ of every equation, thereby increasing the size of the system. That means multiplying the equation by $m(\Omega)$ and component-wise integration, as defined before. The general $n^\text{th}$ order moment systems for photon transport is
\begin{align}
\label{eq:momentmodel_photons}
\frac{1}{c}\partial_t \langle m(\Omega) \cdot \psi(x,\Omega,\epsilon,t) \rangle + \nabla \cdot \langle \Omega m(\Omega) \cdot \psi(x,\Omega,\epsilon,t) \rangle = \\
\kappa(x) \langle m(\Omega) \cdot (B(T)-\psi(x,\Omega,\epsilon,t))\rangle \nonumber \\ + \langle m(\Omega)(\int \limits_{S^2} \sigma(x,\Omega' \cdot \Omega,\epsilon) \cdot \psi(x,\Omega',\epsilon,t) \dd\Omega' - \psi(x,\Omega,\epsilon,t)) \rangle \nonumber
\end{align}
where the moments of the differential scattering cross section appear in the last term, since
\begin{align}
 \langle \Omega^k(\int \limits_{S^2} \sigma(x,\Omega' \cdot \Omega) \cdot \psi(x,\Omega',\epsilon,t) \dd\Omega'\rangle = \langle \Omega^k \underbrace{2\pi\int\limits_{-1}^1 \mu^k\sigma(x,\mu,\epsilon)\dd\mu}_{= \sigma^{(k)}(x,\epsilon)}\cdot \psi(x,\Omega,\epsilon,t)\rangle.
\end{align}

As can be understood easily, the $n^\text{th}$ order system will always contain the ${(n+1)}^\text{st}$ moment of the distribution, that means first we need to solve the arising closure problem, i.e. find \ $\psi^{(n+1)} := \psi^{(n+1)}(\psi^{(0)},\ldots,\psi^{(n)})$. The way we choose our closure determines the capabilities of our system to model physical situations correctly and should additionally guarantee certain desirable mathematical properties, e.g.\ existence of a solution.

\section{Minimum Entropy Closure}
\label{chap:me_closure}
In this section we will introduce the minimum entropy closure and analyze the resulting system. The idea is to define the highest order moment to be the respective moment of the distribution which minimizes the mathematical entropy of the system while reproducing the (given) lower, $0^\text{th}$ until $n^\text{th}$, order moments. We have to bear in mind that there is a whole family of distributions which fulfill the latter condition. Among those we choose the one which minimizes the entropy of the system, since this is the physically most probable one. The mathematical formulation of this optimization problem is as follows
\begin{align}
\label{eq:mini_prob}
\underset{\psi}{\min} & \quad H(\psi)\\
\text{s.t.} & \quad \langle \psi \cdot m \rangle = E, \nonumber
\end{align}
where $H$ is the entropy, $E$ a vector of prescribed moments, $\psi$ and $m$ as before. After obtaining the minimizer $\psi_{ME}$ we set 
\begin{equation}
\psi^{(n+1)} := \langle \tilde{\Omega}^{n+1} \psi_{ME }\rangle.
\end{equation}
\begin{definition}
The entropy for bosons is \cite{Ros54,Ore55}
\begin{align}
H_B(\psi) = \langle \frac{2k\nu^2}{c^3}((n+1) \ln(n+1) - n \ln(n)) \rangle
\end{align}	 
where n is the occupation number and it holds that
$\psi = \frac{2h\nu^3}{c^2}n.$
\end{definition}

Additionally, we introduce the following
\begin{definition}
$\mathcal{M}_n$ denotes the minimum entropy approximation of $n^\text{th}$ degree, i.e.\ whose solution reproduces the first $n+1$ moments.
\end{definition}

\subsection{Minimum entropy solution}
\label{sec:MinimumEntropySolution}
The minizer for equation (\ref{eq:mini_prob}) is given by \cite{dubrofeu99}
\begin{align}
\label{eq:psi_ME}
	\psi_{ME}(\alpha)= \frac{2h\nu^3}{c^2} (\exp(\frac{h\nu}{kT} \alpha \cdot m)-1)^{-1},
\end{align}
where $\alpha \in \mathbb{R}^n$ is the vector of Lagrange multipliers.

The non-negativity of the solution shows a clear advantage of the minimum entropy model compared to the spherical harmonics for example, since in the latter unphysical, negative distributions can occur. It is reasonable to simplify the frequency-integral by applying the Stefan-Boltzmann law:
\begin{align}
\label{eq:Stefan-Boltzmann-law}
\langle \psi_{ME}(\alpha) \rangle &= \int \limits_0^\infty \int \limits_{S^2} \frac{2h\nu^3}{c^2} (\exp(\frac{h\nu}{kT} \alpha \cdot m)-1)^{-1} \dd\Omega \dd\nu
&= \sigma_{\text{stefan}} \int \limits_{S^2} \frac{T^4}{(\alpha \cdot m)^4} \dd\Omega,
\end{align}
where $\sigma_{\text{stefan}}$ is the Stefan-Boltzmann constant.

We write our minimizer $\psi_{ME}(\alpha)$ in dependency of the Lagrange multipliers, instead of the spatial, angular, energy and time variables, as in $\psi(x,\Omega,\epsilon,t)$. The Lagrange multipliers have to be determined from the set of constraints $\langle \psi \cdot m \rangle = E$ and thus depend on the moments $\psi^{(i)}$ $0 \leq i \leq n$ and therefore on $(x,\Omega,\epsilon,t)$ just as before.

Let us state important properties: obviously the underlying distribution is always non-negative. Additionally it can be shown, cf.\ \cite{dubrofeu99}, that the moment system closed by the minimum entropy closure is symmetrizable hyperbolic.

Now we take a closer look at the constraints that restrict the distribution $\psi$ and its moments or the normalized moments respectively.
\begin{remark}
From now on, we shall only consider the one-dimensional case, that means our model assumes slab-geometry. Effectively $\Omega$ is replaced by $\mu$ and hence all moments become one-dimensional quantities.
\end{remark}
With this assumption it becomes easy to see that the absolute value of $i^\text{th}$ moment decreases as i increases, i.e.\ 
\begin{align}
\label{eq:psisgetsmaller}
\left | \psi^{(i+1)} \right | \leq \left | \psi^{(i)} \right | \quad \forall i \in \mathbb{N},
\end{align}
which tells us that the normalized flux is limited, i.e.\ $\left | \frac{\psi^{(1)}}{\psi^{(0)}} \right |\leq 1$ which guarantees that the speed of propagation in our model is limited by the speed of light, a fact that does not hold in diffusion models. This follows from a study of the characteristic velocities of the system, i.e.\ the speed of information propagation.

Now that we calculated the minimum entropy solution, the closure is defined by
\begin{align}
\psi^{(n+1)} := \langle \mu^{n+1} \psi_{ME }\rangle.
\end{align}
As mentioned before, the Lagrange multipliers have to be determined from the set of constraints $\langle \psi \cdot m \rangle = E$, i.e.\ it is in general not possible to express $\psi^{(n+1)}$ explicitly in terms of the lower order moments. Therefore we introduce the following notation:
\begin{align}
\psi^{(n+1)} = \chi(\frac{\psi^{(1)}}{\psi^{(0)}},\ldots,\frac{\psi^{(n)}}{\psi^{(0)}}) \cdot \psi^{(0)},
\end{align}
where $\chi$ is called Eddington factor. One is allowed to work with normalized quantities since then the Eddington factor no longer explicitly depends on the first Lagrange multiplier, see \cite{WriFraKla08}. A further aspect to notice is that the Eddington factor is an odd/even function on the odd/even moments, which follows from the symmetry of the moments.

\subsection{$\mathcal{M}_1$}
\label{sec:M1}
It is not possible to derive an explicit formula of the Eddington factor for arbitrary dimension $n$. However, there is an analytical solution for $\mathcal{M}_1$, see \cite{dubrofeu99} for a proof:
\begin{equation}
\chi(\frac{\psi^{(1)}}{\psi^{(0)}}) = \frac{3+4(\frac{\psi^{(1)}}{\psi^{(0)}})^2}{5+2\sqrt{4-3(\frac{\psi^{(1)}}{\psi^{(0)}})^2}}
\label{eq:shi_M1}
\end{equation}

\begin{figure}[ht]
  \centering
  \includegraphics[width=0.66\textwidth]{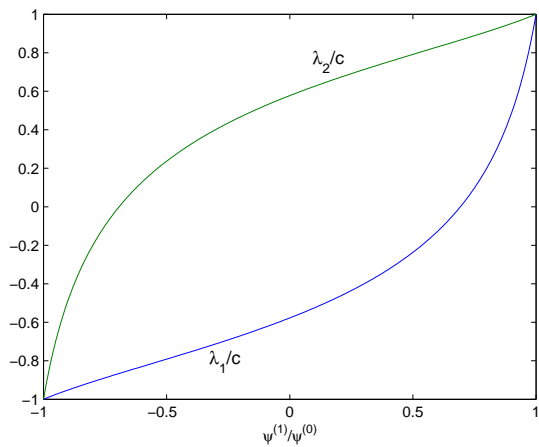}
  \caption{Characteristic velocities of the $\mathcal{M}_1$ system}
  \label{fig:shi_n2_EV}
\end{figure}

See figure \ref{fig:shi_n2_EV} for a plot of the (analytically expressible) eigenvalues of the $\mathcal{M}_1$ system. Let us observe that when the absolute value of the first normalized moment approaches 1, the distribution has to become more and more peaked around $\pm1$. That means the direction of travel for all particles has to be in a smaller and smaller cone around the angle $\arccos(\pm1)$. In the limit the distribution becomes a Dirac, $\psi = \delta(\mu\pm1)$. 

\subsection{$\mathcal{M}_2$}
\label{sec:M2}
In case of $\mathcal{M}_2$, it is also possible to calculate an (implicit) formula of the Eddington factor. The formula is very lengthy and includes certain integral expressions and is therefore omitted at this point, see appendix \ref{sec:appendix_shi_boson} in the appendix. To speed up the evaluation we use a rational function with 15 parameters, fitted using a least-squares method to the aforementioned formula, see appendix \ref{sec:appendix_shi_fit} and \cite{ranvinc08}. Additionally we will use the fit as a means of verification for our numerical computations in the next section.

By elemental calculus one can show that
\begin{equation}
\label{eq:psi_geq_condition}
\frac{\psi^{(2)}}{\psi^{(0)}} \geq \left(\frac{\psi^{(1)}}{\psi^{(0)}}\right)^{2}
\end{equation} 
holds. This in turn further confines the domain on which the Eddington factor is defined. Corto and Fialkow derived in \cite{corfia91} general realization conditions of truncated moment problems, but in the present case properties can be derived by direct calculations. We want to study what happens when the normalized moments approach the boundary of their admissible domain and establish boundary curves, which yield the extreme values of the Eddington factor.

\begin{lemma}
Assume that inequality (\ref{eq:psi_geq_condition}) is sharp, i.e.\ we are at one boundary of the domain of the Eddington factor, the following holds:
\begin{itemize}
\item[(i)] The underlying distribution is a Dirac delta distribution
\begin{equation*}
\psi = \delta(\mu-a) \qquad a \in [-1,1].
\end{equation*}
\item[(ii)]
The normalized moments, are of the form
\begin{equation}
\label{eq:boundary1}
\frac{\psi^{(i)}}{\psi^{(0)}} = a^i \qquad a \in [-1,1] \quad \forall i \leq n+1.
\end{equation}
\item [(iii)]
\begin{align}
\label{eq:boundary2}
\frac{\psi^{(2)}}{\psi^{(0)}} = 1 \quad \text{and} \quad \frac{\psi^{(3)}}{\psi^{(0)}} = \frac{\psi^{(1)}}{\psi^{(0)}} = a \qquad a \in [-1, 1]
\end{align}
represents a curve and together with equation (\ref{eq:boundary1}) defines the admissible set of moments for $\mathcal{M}_2$.
\end{itemize}
\end{lemma}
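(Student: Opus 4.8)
The plan is to exploit the fact that the minimum entropy distribution is explicit in the Lagrange multipliers $\alpha$ (equation~\eqref{eq:psi_ME}) together with the elementary inequality $|\psi^{(i+1)}|\le|\psi^{(i)}|$ from \eqref{eq:psisgetsmaller} specialized to $i=1$. For part (i), I would first normalize so that $\psi^{(0)}=1$ and work with the reduced measure on $[-1,1]$ obtained after carrying out the frequency integral via the Stefan--Boltzmann computation \eqref{eq:Stefan-Boltzmann-law}, which turns $\langle\mu^k\psi_{ME}\rangle$ into an integral of $\mu^k$ against a positive weight $w(\mu)\dd\mu=T^4(\alpha\cdot m)^{-4}\dd\mu$ up to constants. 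The condition that \eqref{eq:psi_geq_condition} is an equality reads $\int \mu^2 w = (\int \mu w)^2 / \int w$, i.e. the variance of $\mu$ under the probability measure $w\dd\mu/\!\int w$ is zero. A non-negative measure with zero variance must be concentrated at a single point $a=\int\mu w/\!\int w$, hence $\psi$ is (a multiple of) $\delta(\mu-a)$ with $a\in[-1,1]$ because the support of $w$ lies in $[-1,1]$. The only subtlety is that $\psi_{ME}$ as written is a smooth positive function, never literally a Dirac; so the rigorous statement is that the boundary is a limit of such distributions, and the limiting moment vector is that of $\delta(\mu-a)$ — I would phrase (i) as this limiting identification, matching the analogous Dirac-limit discussion already given for $\mathcal{M}_1$ in section~\ref{sec:M1}.

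For part (ii), once $\psi=\delta(\mu-a)$ (in the limiting sense), the moments are immediate: $\psi^{(i)}=\langle\mu^i\delta(\mu-a)\rangle\cdot\psi^{(0)}=a^i\psi^{(0)}$ for every $i$, in particular for all $i\le n+1$, which is exactly \eqref{eq:boundary1}. This step is essentially a one-line consequence of (i) and needs no real work beyond noting that normalization by $\psi^{(0)}$ removes the overall mass.

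For part (iii), I would identify the second piece of the boundary of the admissible moment domain. By \eqref{eq:psisgetsmaller} we always have $\psi^{(2)}/\psi^{(0)}\le 1$; the other extreme face is where this is an equality, $\psi^{(2)}/\psi^{(0)}=1$. Writing $\langle\mu^2 w\rangle=\langle w\rangle$ forces $\langle(1-\mu^2)w\rangle=0$, and since $1-\mu^2\ge0$ on $[-1,1]$ and $w>0$, the measure must be supported on $\{\mu=\pm1\}$, i.e. $\psi=p\,\delta(\mu-1)+q\,\delta(\mu+1)$ (again in the limiting sense) with $p,q\ge0$, $p+q=\psi^{(0)}$. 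Then $\psi^{(1)}/\psi^{(0)}=p-q=:a\in[-1,1]$ and, crucially, $\psi^{(3)}/\psi^{(0)}=p\cdot1^3+q\cdot(-1)^3=p-q=a$, which gives \eqref{eq:boundary2}: the cubic-odd-moment collapses back to the first moment on this face. Tracing the curve as $a$ ranges over $[-1,1]$ gives a one-parameter family, and together with the curve from \eqref{eq:boundary1} (the Dirac-at-$a$ family, where $\psi^{(2)}/\psi^{(0)}=a^2$ is the parabola bounding the region from below) these two arcs bound the admissible set of $(\psi^{(1)}/\psi^{(0)},\psi^{(2)}/\psi^{(0)})$ for $\mathcal{M}_2$; the endpoints $a=\pm1$ of both arcs coincide at $(\pm1,1)$, so the boundary closes up.

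The main obstacle is the first one: making the "Dirac" statement precise, since $\psi_{ME}$ is genuinely a positive analytic function and the equality case in \eqref{eq:psi_geq_condition} is never attained by an actual admissible $\psi_{ME}$ but only approached in the limit $|\alpha\cdot m|\to0$ on part of $[-1,1]$. I would handle this by arguing at the level of moment sequences rather than distributions — the equality in the variance inequality pins down the limiting moments uniquely to be those of $\delta(\mu-a)$, by the standard fact that a positive measure on a compact interval is determined near a degenerate (zero-variance) configuration — and then transport this back to statements (ii) and (iii), which become pure algebra. Everything else (the two "elemental calculus" inequalities and the evaluation of moments of point masses) is routine.
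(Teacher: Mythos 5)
Your proof is correct and reaches the same three conclusions by the same overall route (the sharp boundary corresponds to a single Dirac, the face $\psi^{(2)}/\psi^{(0)}=1$ to a combination of Diracs at $\pm1$), but your justifications of the two key steps are genuinely different from, and tighter than, the paper's. For (i) the paper asserts that $\int\mu^i\psi=(\int\mu\psi)^i$ ``can only hold for $\psi=\delta(\mu-a)$,'' invoking the convolution property of the delta --- which only verifies that the Dirac attains equality, not that nothing else does; your zero-variance argument (equality in \eqref{eq:psi_geq_condition} means the normalized positive measure has vanishing variance, hence is a point mass at $a=\langle\mu w\rangle/\langle w\rangle\in[-1,1]$) supplies the missing necessity direction. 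For (iii) the paper guesses the two-Dirac form from the superposition principle and physical intuition about two opposing beams, whereas you derive it: $\langle(1-\mu^2)w\rangle=0$ with $1-\mu^2\ge0$ forces the support onto $\{\pm1\}$, after which the identity $\psi^{(3)}/\psi^{(0)}=p-q=\psi^{(1)}/\psi^{(0)}$ is pure algebra. You also flag, correctly, that the minimum entropy ansatz \eqref{eq:psi_ME} is a strictly positive smooth function and never literally a Dirac, so the boundary statements should be read as limits of moment vectors; the paper passes over this silently (it only alludes to the Dirac limit in the $\mathcal{M}_1$ discussion). The one caveat shared by both arguments is that neither actually proves the final clause of (iii) --- that every point between the parabola \eqref{eq:boundary1} and the face \eqref{eq:boundary2} is realizable, so that the two curves bound the admissible set rather than merely lie on its boundary --- but since the paper does not address this either, your proposal is at least as complete as the original.
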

\begin{proof}
(i) We start by assuming that
\begin{equation}
\frac{\psi^{(i)}}{\psi^{(0)}} = \left(\frac{\psi^{(1)}}{\psi^{(0)}}\right)^i
\end{equation}
holds for some even $i \geq 2$. This means that
\begin{equation}
\int \limits_{-1}^1 \mu^i \psi =  \left (\int \limits_{-1}^1 \mu \psi \right )^i,
\end{equation}
but still $|\mu|\leq 1$. This equality can only hold for $\psi = \delta(\mu - a)$ for some $a \in [-1,1]$, as can be seen by using the convolution property of the Dirac delta distribution
\begin{equation}
\int \limits_{-\infty}^\infty f(x) \delta(x-a)dx = f(a).
\end{equation}
(ii) Obvious, using (i) and the convolution property again\\
(iii) We can deduce from equation (\ref{eq:psi_geq_condition}) that when the first normalized moment approaches $\pm 1$, the second and third normalized moments approach 1 or $\pm 1$ resp.\ where the limit for the third normalized moment follows from (\ref{eq:psi_geq_condition}). More specifically
\begin{equation}
\frac{\psi^{(1)}}{\psi^{(0)}} =  \pm 1 \quad \text{implies} \quad \frac{\psi^{(2)}}{\psi^{(0)}} = 1, \frac{\psi^{(3)}}{\psi^{(0)}} = \pm 1.
\end{equation}
The physical interpretation for either of these two extreme cases is a beam parallel or anti parallel to the main axis of interest, i.e. the intensity becomes $\delta(\mu - 1)$ or $\delta(\mu + 1)$ respectively. Now considering the case of 2 beams and applying the superposition principle, we expect the distribution to be a linear combination of these two Diracs, i.e. for $\frac{\psi^{(2)}}{\psi^{(0)}} = 1$ we have
\begin{equation}
\psi = (\frac{1}{2} - \frac{a}{2}) \cdot \delta(\mu-1) + (\frac{1}{2} + \frac{a}{2}) \cdot \delta(\mu+1), a \in [-1,1]
\end{equation}
and by calculation of moments we get
\begin{equation}
\frac{\psi^{(3)}}{\psi^{(0)}} = \frac{\psi^{(1)}}{\psi^{(0)}}
\end{equation}
from which we deduce the form of the second boundary curve
\begin{equation}
\frac{\psi^{(2)}}{\psi^{(0)}} = 1 \quad \text{and} \quad \frac{\psi^{(3)}}{\psi^{(0)}} = \frac{\psi^{(1)}}{\psi^{(0)}} = a \qquad a \in [-1, 1].
\end{equation}
\end{proof}

The analysis of the boundary helps us to understand the capabilities of the model to simulate certain scenarios, e.g.\ the usage of multiple beams and assists us in verifying our numerical computations later on. It will be extended to higher orders in section \ref{chap:properties_Mn}, though the upper bound on the propagation speed remains the same. For a more detailed study of different models, but in the electron case, see also \cite{frahenkla06}.

\section{Computation of the Eddington Factor}
\label{chap:computation_eddi}
As mentioned earlier it is not possible to calculate the Eddington factor for a general $\mathcal{M}_n$ closure analytically. This section is devoted to the necessary numerical calculations. Here we present a Monte-Carlo-type ansatz. First, we provide Lagrange-multipliers and calculate the Eddington factor as the highest order normalized moment directly. This is possible since we know the general form of the minimizer, 
\begin{equation}
\psi_{\text{ME}} = \underbrace{T^4 \sigma_{stefan}}_{=\text{const}} \int \limits_{-1}^1 \frac{1}{(\alpha^{(0)} + \mu  \alpha^{(1)} + \ldots + \mu^n \alpha^{(n)})^4} \dd\mu.
\end{equation}
To generate valid moments, observe that the polynomial $(\alpha^{(0)} + \mu  \alpha^{(1)} + \ldots + \mu^n \alpha^{(n)})^4$, appearing in the denominator, must not have real roots within the interval $[-1,1]$, whereas complex roots are allowed. The exception to this is the boundary curve. As we know from section \ref{sec:M2}, the distribution tends toward a Dirac, when the two normalized moment approach each other, i.e. 
\begin{align}
\frac{\psi^{(i)}}{\psi^{(0)}} \rightarrow \left(\frac{\psi^{(j)}}{\psi^{(0)}}\right)  \Rightarrow \psi \rightarrow \delta(r-\mu),
\end{align}

where $r$ is the unique root of the polynomial in the denominator and it holds that $r \in [-1,1]$. Technically, in our numerical simulations we circumvent the division by zero by calculating the denominator first and if it equals zero, setting the integral to 1. The last choice is no approximation but the analytical solution, since $\int_{-\infty}^\infty \delta(x) \dd x = 1$. It is important to note that the root is only unique for this boundary, while for higher orders and other boundaries the distribution becomes a linear combination of various Diracs. Hence multiple roots in the interval $[-1,1]$ must be allowed.

Now we randomly, hence titled Monte-Carlo-type approach, choose (valid) \\Lagrange-multipliers $\alpha$ and linearly interpolate the result $\frac{\psi^{(n+1)}}{\psi^{(0)}}$ in the space of normalized moments $\left ( \frac{\psi^{(1)}}{\psi^{(0)}}, \frac{\psi^{(2)}}{\psi^{(0)}}, \ldots, \frac{\psi^{(n)}}{\psi^{(0)}} \right )$. We utilize randomness in this ansatz, since it is at first glance not clear at all how to transverse the space of normalized moments by perturbation of the Lagrange multipliers. For a non-random algorithm, see \cite{WriFraKla08}. So we simply create enough data points to sufficiently fill our domain. Note that figures are shown for 1000 data points for demonstration purposes, while later computations rely on Eddington factors interpolated from far more data points. Interpolation is done using MATLAB, see \cite{sand87} for details of the interpolation algorithm.

See figure \ref{fig:shi_n2_mc} for a comparison between the analytical and the numerical solution for $\mathcal{M}_1$, and figure \ref{fig:diff_shi_ph_el_3_mc} for the $\mathcal{M}_2$ closure and its comparison the rational fit.

\begin{figure}[htb]
  \centering
  \includegraphics[width=0.5\textwidth]{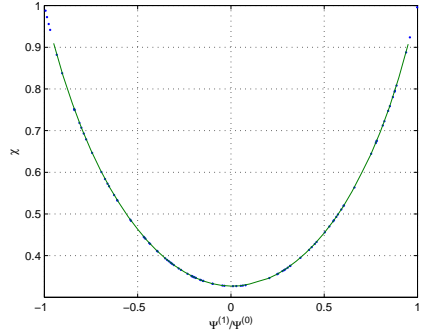}  
  \caption{$\mathcal{M}_1$ Eddington factor}
  \label{fig:shi_n2_mc}
\end{figure}

\begin{figure}[htbp]
  \centering
  \includegraphics[width=0.49\textwidth]{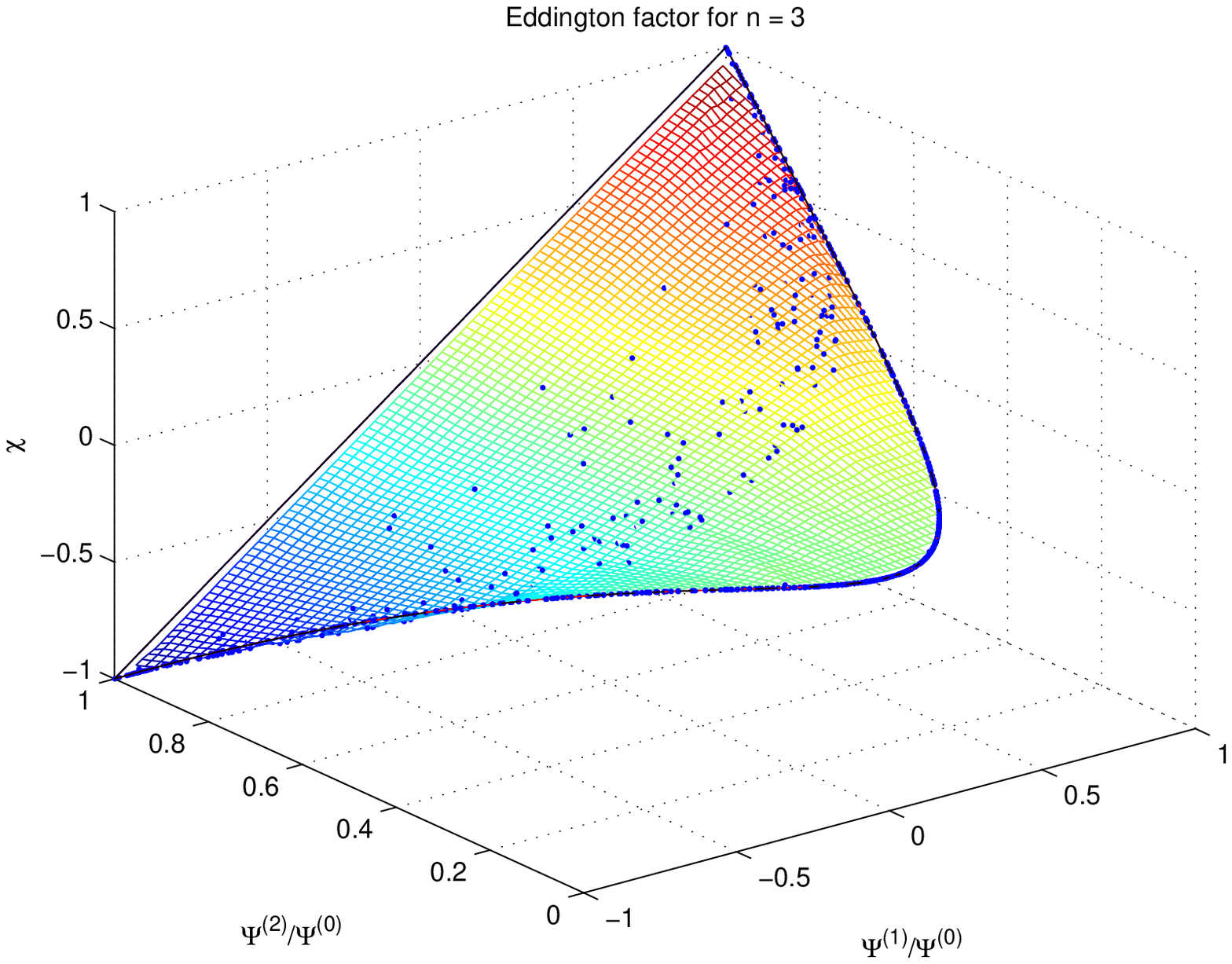}
  \includegraphics[width=0.49\textwidth]{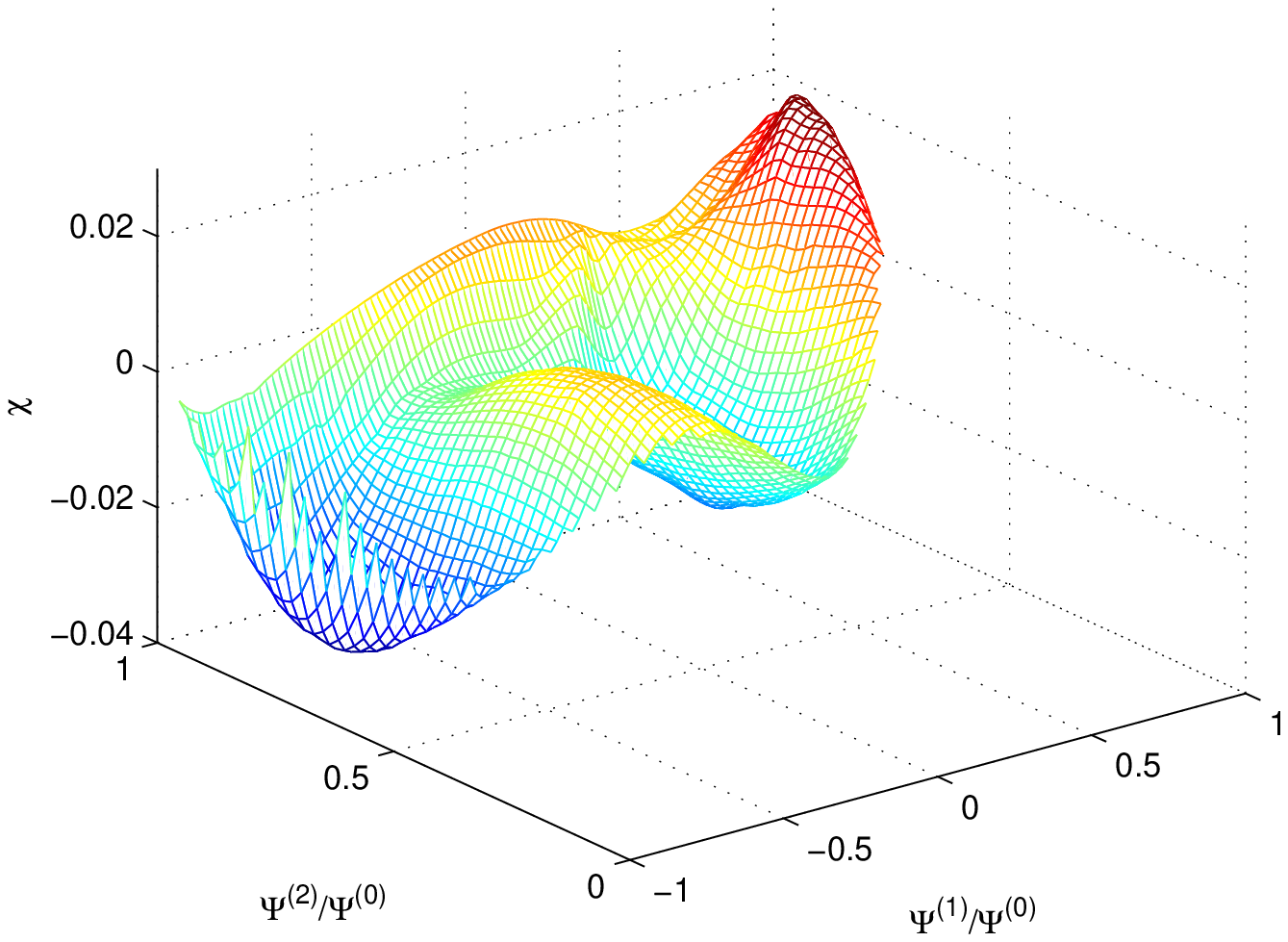}
  \caption{$\mathcal{M}_2$ Eddington factor with 1000 data points and difference to the rational fit}
  \label{fig:diff_shi_ph_el_3_mc}
\end{figure}

\section{Properties of the $\mathcal{M}_n$ Systems}
\label{chap:properties_Mn}
Now, with some important properties of the minimum entropy solution, -system and Eddington factor under our belt, we shall further investigate the properties which can be derived from the hyperbolicity of the system. In order to do so, we rewrite the system into the following form:
\begin{equation}
U_t + F(U)_x = C(U,T)
\label{eq:hyperbolic_form}
\end{equation}
with
\begin{equation}
U := \frac{1}{c} \left ( \begin{matrix} \psi^{(0)} \\ \psi^{(1)}  \\ \psi^{(2)}  \\ \vdots  \end{matrix} \right ), \qquad 
F(U) := \left ( \begin{matrix} \psi^{(1)} \\ \psi^{(2)}  \\ \psi^{(3)}  \\\vdots \end{matrix} \right ) \quad \text{and} \quad
C(U,T) := \left ( \begin{matrix} \kappa(4\pi B(T)-\psi^{(0)}) \\ -(\kappa+\sigma^{(1)})\psi^{(1)} \\ -(\kappa+\sigma^{(2)})\psi^{(2)} \\ \vdots \end{matrix} \right )
\end{equation}

\subsection{$\mathcal{M}_2$}
\begin{figure}[t!]
  \centering 
  \includegraphics[width=0.49\textwidth]{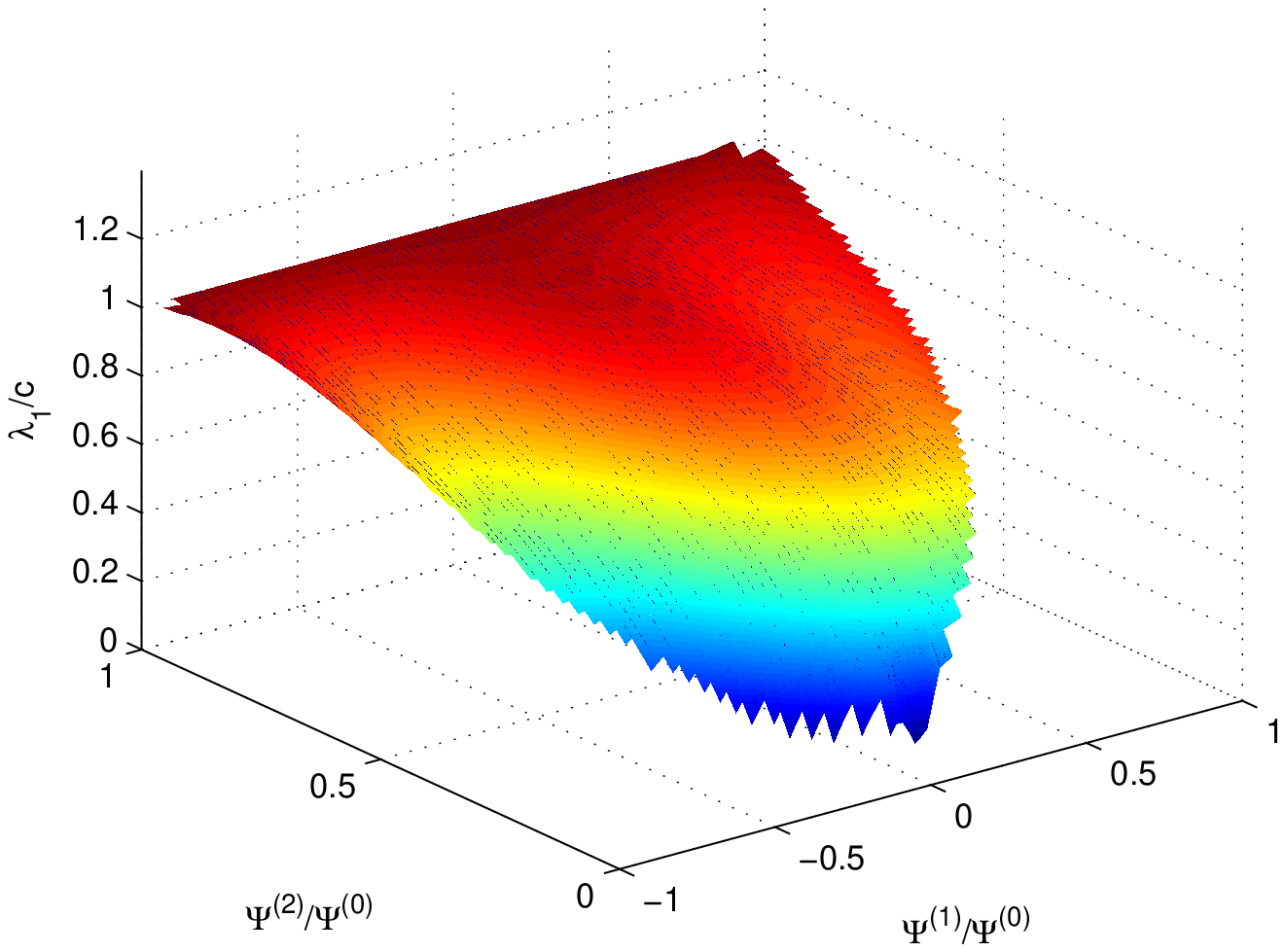}  
  \includegraphics[width=0.49\textwidth]{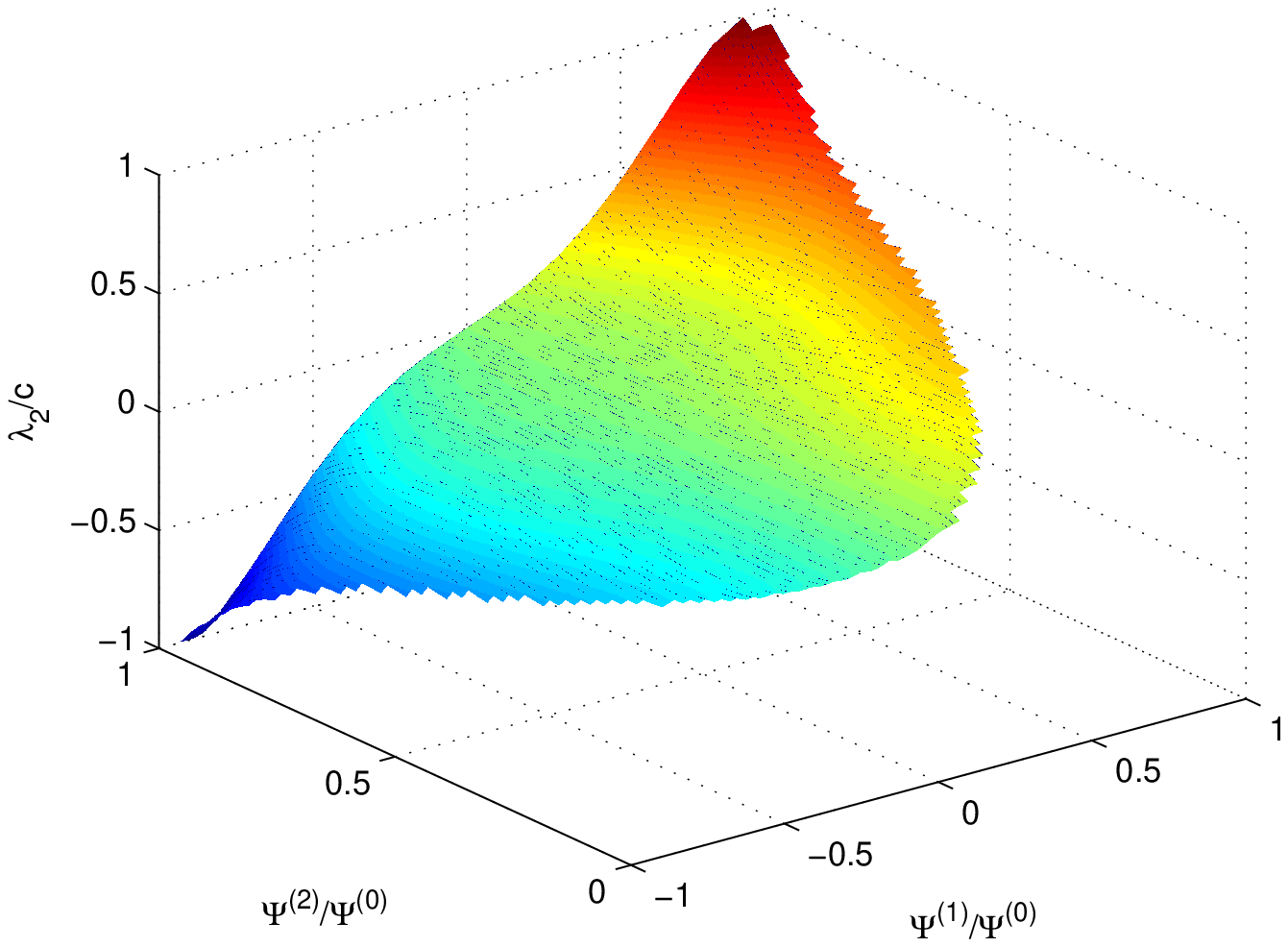}  
  \includegraphics[width=0.5\textwidth]{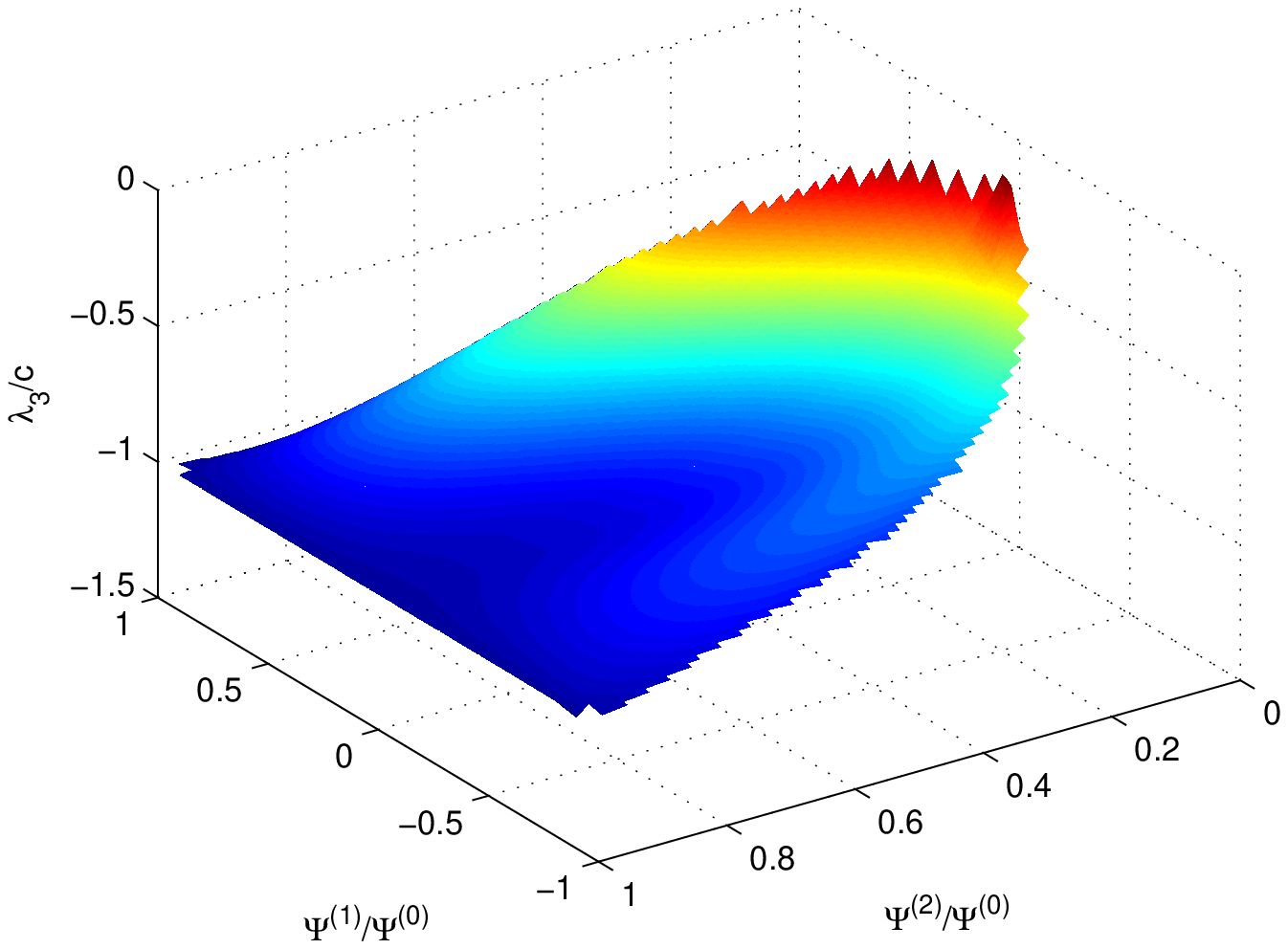}  
  \caption{Characteristic velocities of the $\mathcal{M}_2$ system}
  \label{fig:EV_n3}
\end{figure}
First off we want to calculate the characteristic velocities with which the system propagates information. They will also allow us to gain insight into how the system behaves numerically. The Jacobian matrix J of the system (\ref{eq:hyperbolic_form}) in this case, is given by:
\begin{align}
J(U) &= \frac{\partial F(U)}{\partial U} \nonumber \\
&= c \left ( \begin{matrix} 0 & 1 & 0 \\ 0 & 0 & 1 \\ \frac{\partial \psi^{(3)}}{\partial \psi^{(0)}} & \frac{\partial \psi^{(3)}}{\partial \psi^{(1)}} & \frac{\partial \psi^{(3)}}{\partial \psi^{(2)}} \end{matrix} \right )
\end{align}
where we use
\begin{align}
\frac{\partial \psi^{(3)}}{\partial \psi^{(0)}} = \frac{\partial \chi \psi^{(0)}}{\partial \psi^{(0)}} = \chi - \left ( \frac{\partial \chi}{\partial \left ( \frac{\psi^{(2)}}{\psi^{(0)}} \right )} \frac{\psi^{(2)}}{\psi^{(0)}} +  \frac{\partial \chi}{\partial \left ( \frac{\psi^{(1)}}{\psi^{(0)}} \right )} \frac{\psi^{(1)}}{\psi^{(0)}}  \right ).
\end{align}
Since we do not have an analytical expression for the Eddington factor, we use symmetric differences to approximate the derivatives.
Additionally, for $i \in \{1,2\}$
\begin{align}
\frac{\partial \psi^{(3)}}{\partial \psi^{(i)}} = \frac{\partial( \chi \psi^{(0)})}{\partial \psi^{(i)}} = \psi^{(0)} \cdot \frac{\partial \chi}{\partial \left ( \frac{\psi^{(i)}}{\psi^{(0)}}  \right )} \frac{\partial \left ( \frac{\psi^{(i)}}{\psi^{(0)}}  \right )}{\partial \psi^{(i)}} = \frac{\partial \chi}{\partial \left ( \frac{\psi^{(i)}}{\psi^{(0)}}  \right )}. \nonumber 
\end{align}
Figure \ref{fig:EV_n3} shows the characteristic velocities scaled by the speed of light. As expected all eigenvalues are real, since the system is hyperbolic and velocity of the propagation approaches the speed of light as the normalized moment $\frac{\psi^{(2)}}{\psi^{(0)}}$ approaches 1, a fact that other models are not capable of simulating. This scenario corresponds to the case where all particles are moving in the positive or negative direction of the x-axis, depending on the sign of the characteristic velocity. Note especially the symmetry of the eigenvalues.
 
As mentioned before, it is of great interest to see whether the model can handle radiative non-equilibria. Therefore we study how the radiative intensity behaves while the angular anisotropy increases. Figure \ref{fig:intensity_n3} shows the intensity distribution on a grid of normalized moments for different values of $\mu$ and the maximum intensity for that angle on a logarithmic scale. As would be exptected, the intensity becomes more and more peaked for $|\mu| \rightarrow 1$. and is isotropic for $\mu = 0$.
Figure \ref{fig:intensity_polar_n3} shows the intensity depending on $\arccos(\mu)$ for fixed normalized moments in polar coordinates. The distribution is highly forward peaked for $\frac{\psi^{(2)}}{\psi^{(0)}} \rightarrow 1$ and isotropic for $\frac{\psi^{(2)}}{\psi^{(0)}} = \frac{1}{3}$, as stated in section \ref{sec:methodofmoments}, while the direction of transportation is perpendicular to the x-axis when both normalized moments approach zero which corresponds to the second eigenvalue.

\begin{figure}[htbp]%
\subfloat[][$\mu = 0$]{%
\includegraphics[width=0.5\textwidth]{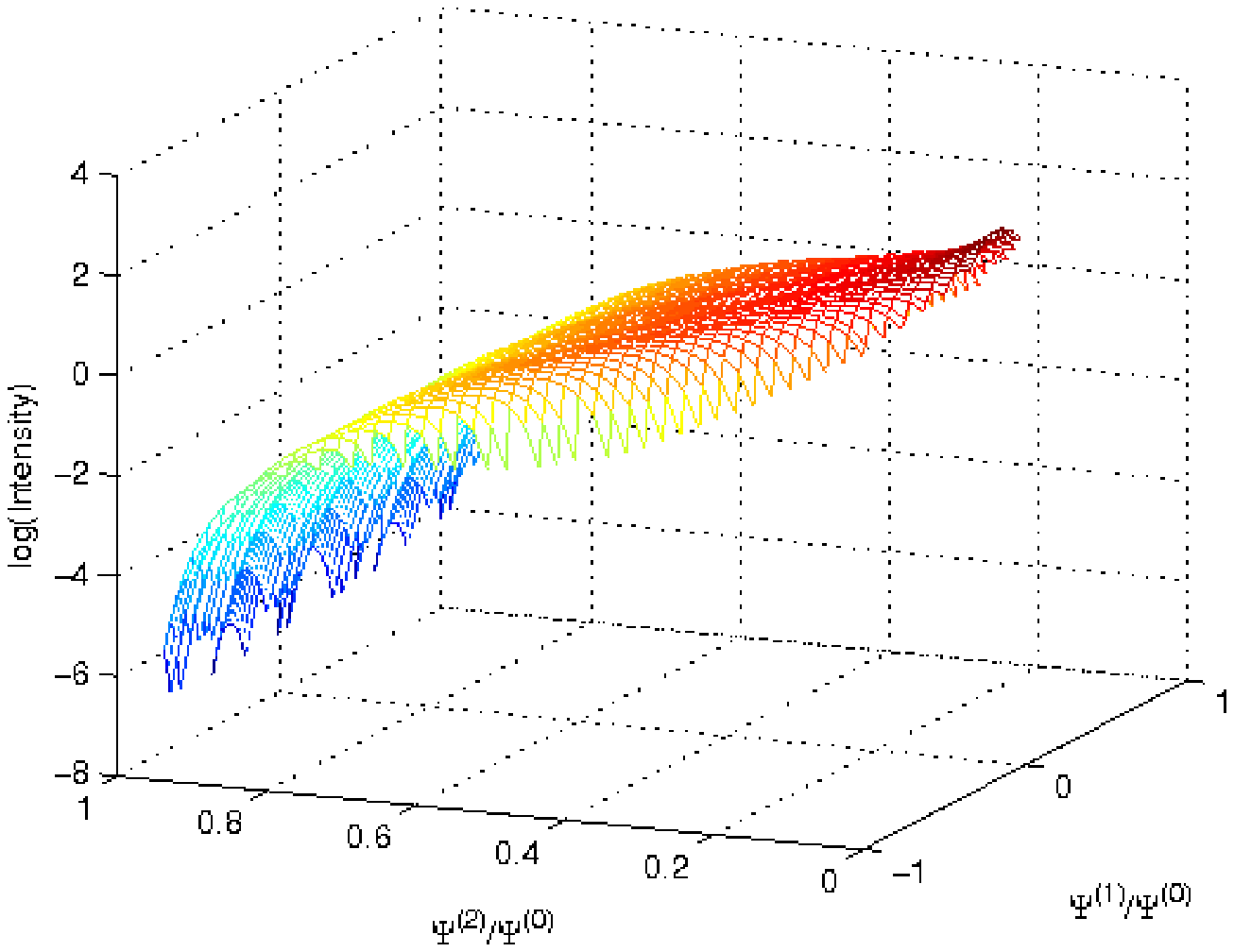}}%
\subfloat[][$\mu = 0.5$]{%
\includegraphics[width=0.5\textwidth]{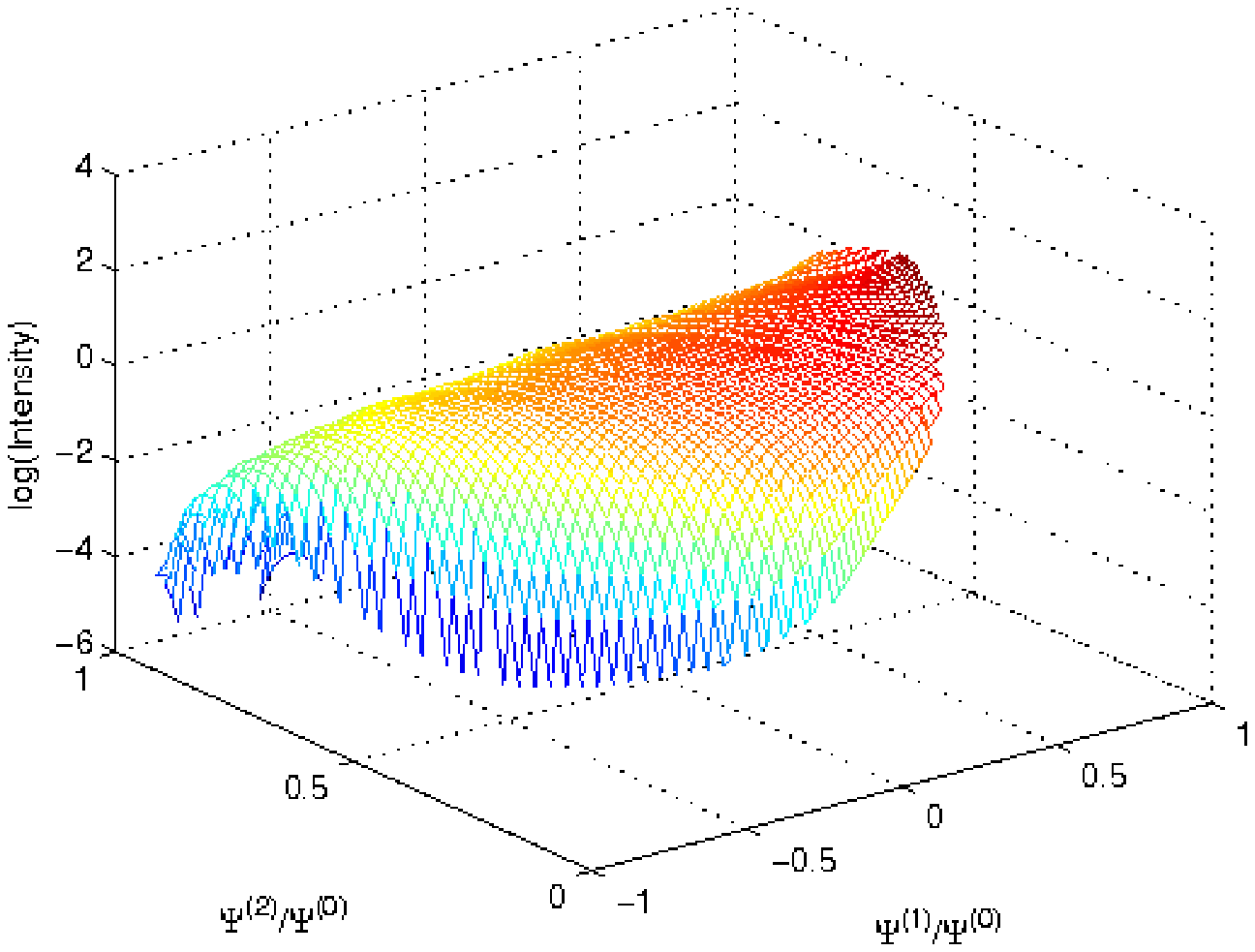}}\\
\centering
\subfloat[][$\mu \rightarrow 1$]{%
\includegraphics[width=0.5\textwidth]{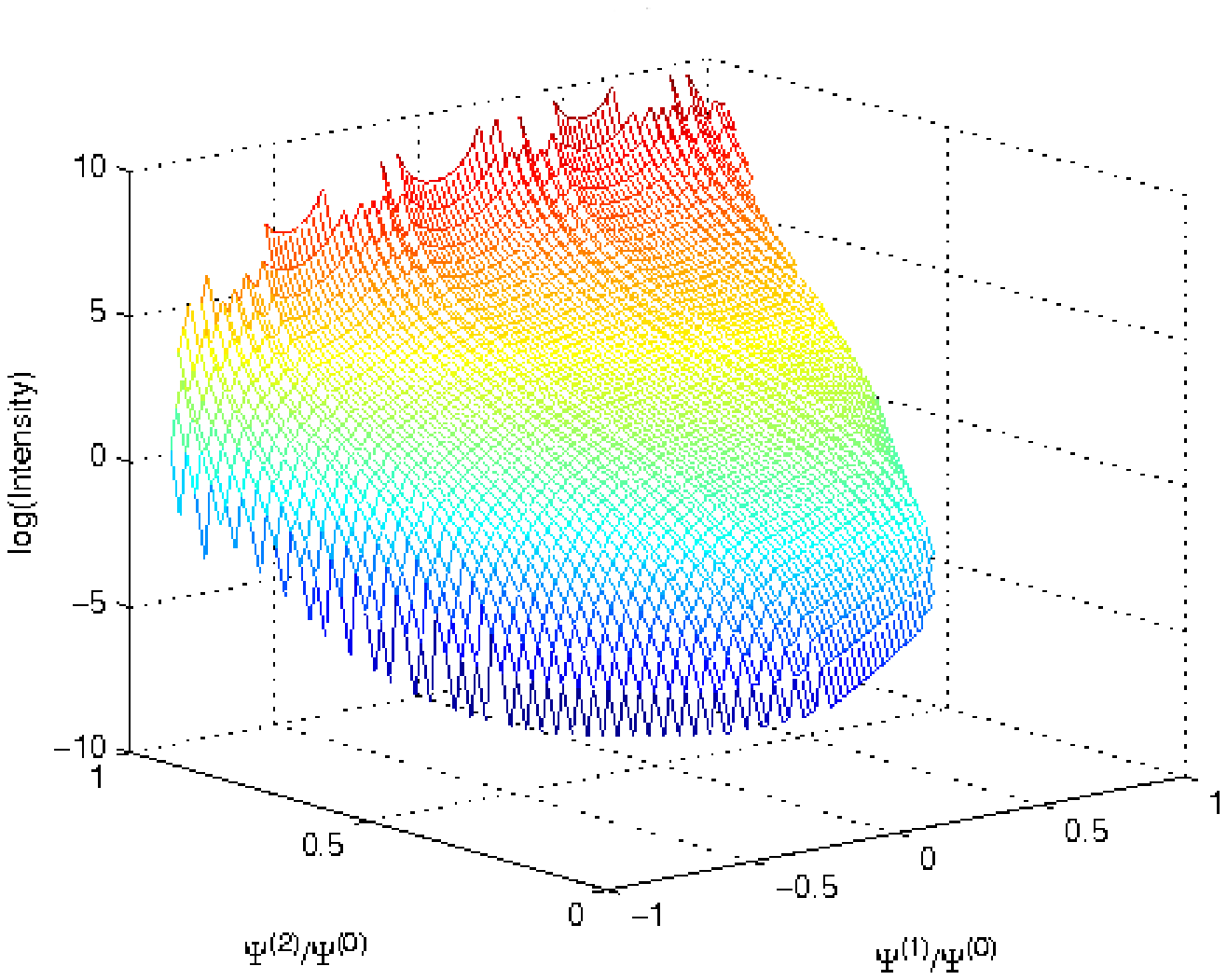}}
\caption{Intensity distribution of the $\mathcal{M}_2$ system}
\label{fig:intensity_n3}%
\end{figure}

\begin{figure}[htbp]%
\subfloat[][$\frac{\psi^{(2)}}{\psi^{(0)}} = 0.14$]{%
\includegraphics[width=0.5\textwidth]{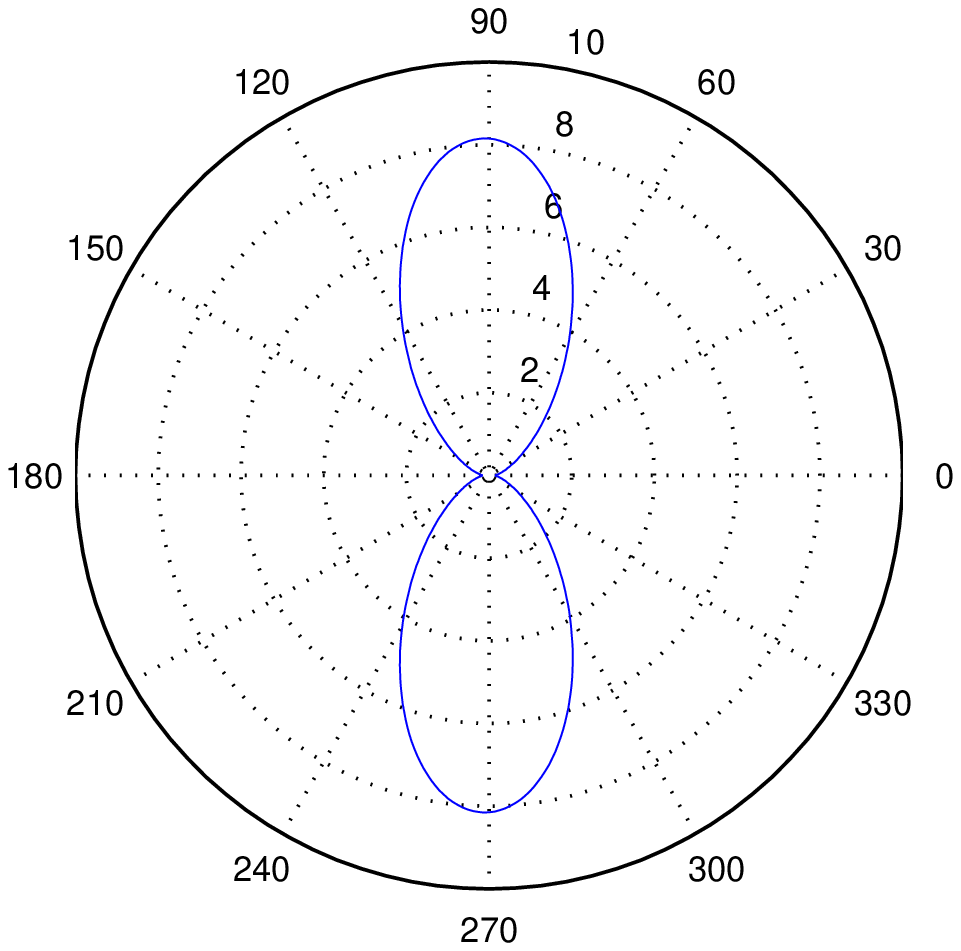}}
\subfloat[][$\frac{\psi^{(2)}}{\psi^{(0)}} = 0.28$]{%
\includegraphics[width=0.5\textwidth]{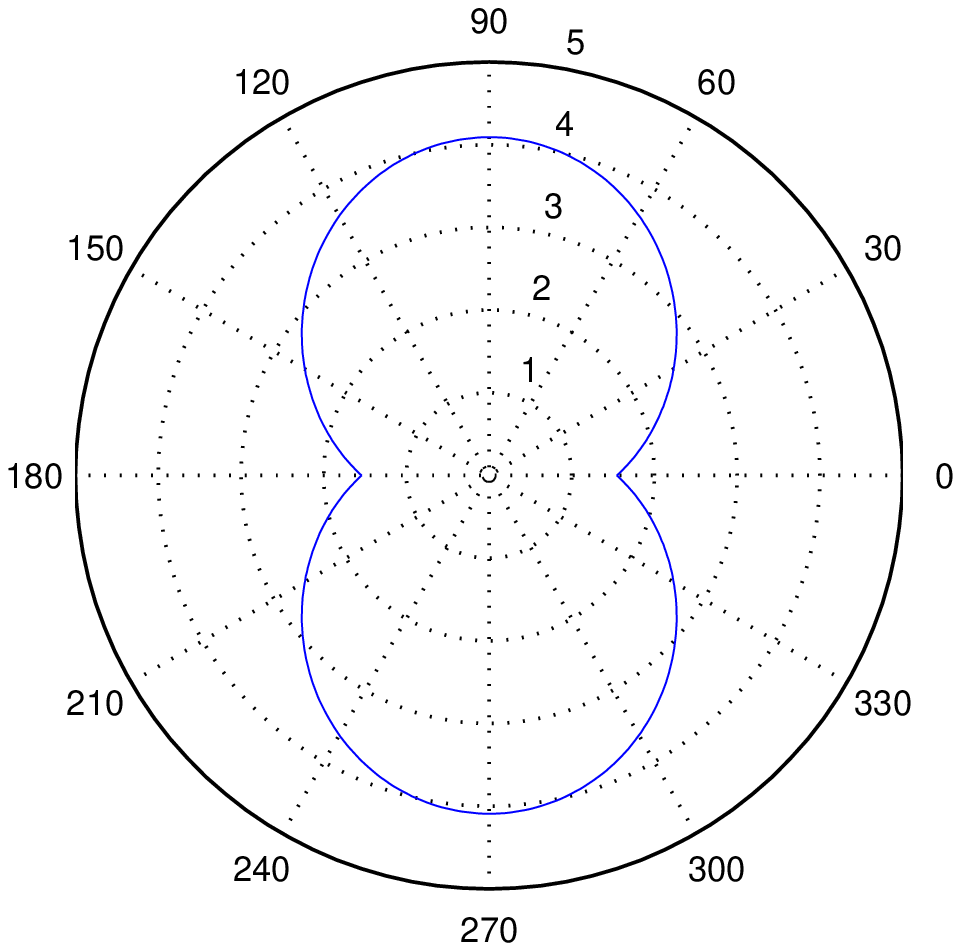}}\\
\subfloat[][$\frac{\psi^{(2)}}{\psi^{(0)}} = 0.42$]{%
\includegraphics[width=0.5\textwidth]{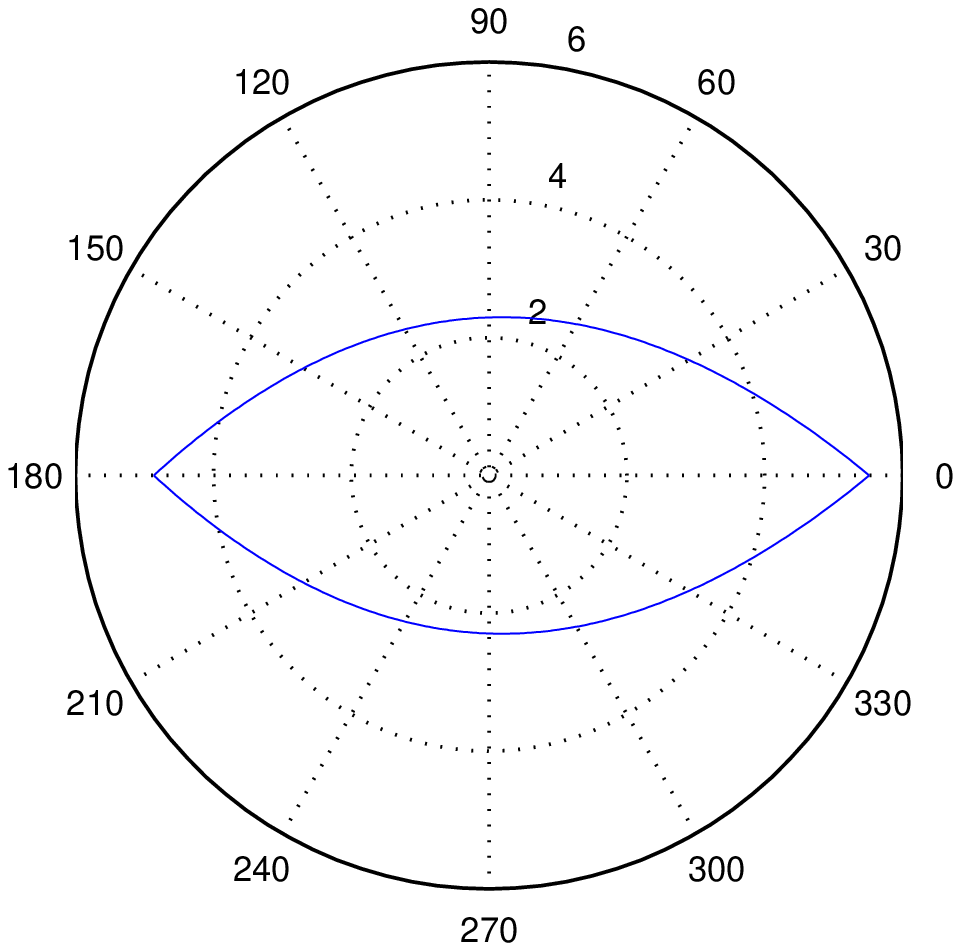}}
\subfloat[][$\frac{\psi^{(2)}}{\psi^{(0)}} = 0.95$]{%
\includegraphics[width=0.5\textwidth]{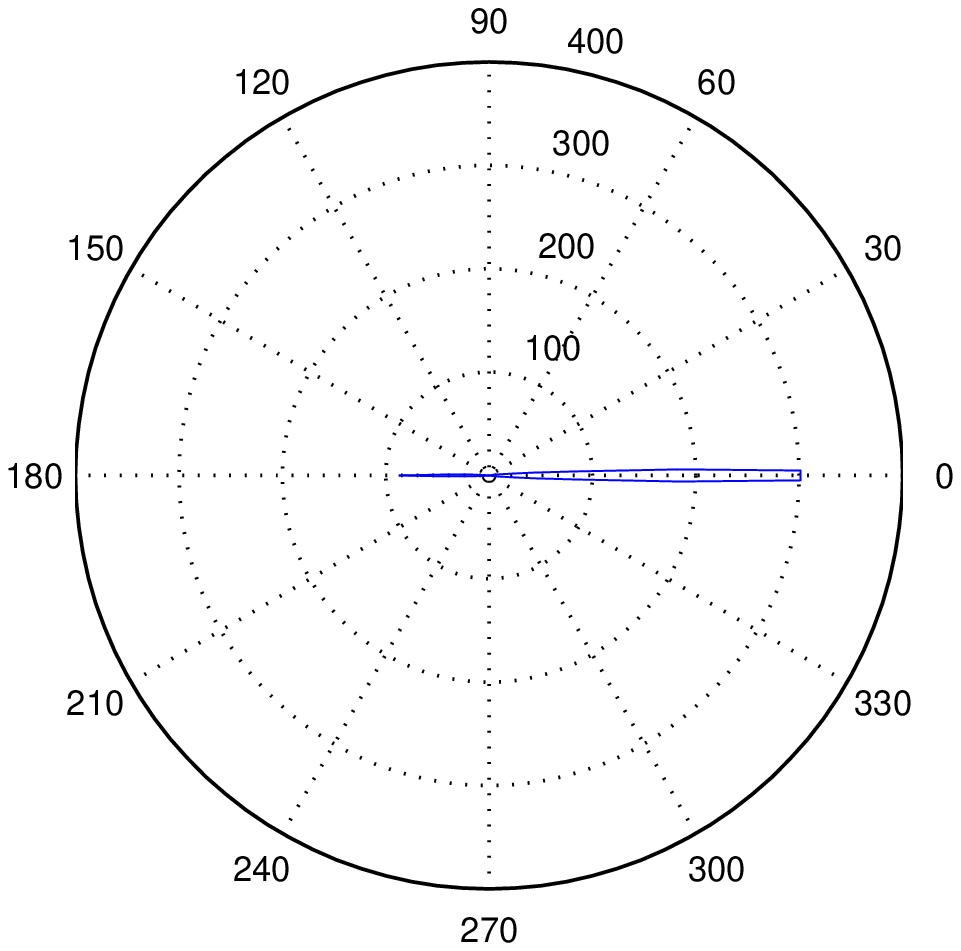}}
\caption{Intensity distribution of the $\mathcal{M}_2$ system; $\frac{\psi^{(1)}}{\psi^{(0)}} = 0.01$ fixed, angles are $\arccos(\mu)$}
\label{fig:intensity_polar_n3}%
\end{figure}

\section{Numerical Results}
\label{chap:num_results}
In this section we discuss numerical simulations using the minimum entropy approximation and compare it to other models of radiative transfer. Firstly the method of solving those systems will be explained, before we study different scenarios, i.e.\ test our simulations with different parameters, boundary or initial conditions. A high order discrete ordinates approximation, cf.\ \cite{kokrwivis95}, is used as a benchmark in the first test case, while in the second scenario we use an analytical solution generated by Su and Olson in \cite{suols95}. Note that we always assume slab-geometry in our simulations, and a 1-dimensional domain. Also, we identify the spatial variable $x$ as the penetration depth of the beam. 
We test minimum entropy approxmations of first and second order, denoted by $\mathcal{M}_1$ and $\mathcal{M}_2$, against the benchmark. Additionally, we add the spherical harmonics methods of order 1 and 3, denoted by $P_1$ and $P_3$ respectively. The $P_N$ approach is one of the oldest approximate methods for radiative transfer and was first introduced by Eddington in 1926, see \cite{eddi26}.
\subsection{Method}
As mentioned earlier, the first order system of partial differential equations resulting from the minimum entropy method is hyperbolic.  We choose to use a kinetic scheme \cite{talper97}, which is a finite volume method, to achieve a spatial discretization in the equations. Here, the fluxes over the cell boundaries are computed, which allows us to prescribe boundary conditions easily. To derive these fluxes, incorporating the closure, we use half-moments, i.e.\
\begin{align}
\psi^{(i)}_+ := \langle \mu^i \cdot \psi \rangle_+ := \int \limits_0^1 \mu^i\cdot \psi \dd\mu \qquad
\psi^{(i)}_- := \langle \mu^i \cdot \psi \rangle_- := \int \limits_{-1}^0 \mu^i\cdot \psi \dd\mu 
\end{align}
where '+' and '-' denote the sign of $\mu$ and hence the direction of transport. To calculate these half-moments, only minor alterations have to be made to the algorithm introduced in section \ref{chap:computation_eddi}. Again, computations (and interpolations) are done on a grid of normalized moments, as we already have the Lagrange multipliers at hand. However, actually only one half-moment has to be computed since we can make use of the symmetry in the moments. 

After the discretizations, we end up with a system of ordinary differential equations, which in turn is being solved by an adaptive Runge-Kutta method.

\subsection{Simulations}
\label{sec:Simulations}
\subsubsection{First test-case}
We set the initial temperature inside the medium to be almost zero and let two rays of light enter at both sides of the domain. The spatial domain is $x \in [0,1]$. We consider a scattering coefficient of $\sigma = 0.01$ and an absorption coefficient of $\kappa = 2.5$. At the boundary we prescribe the moment of the distribution according to $\psi^{(0)} = \sigma_\text{stephan} T^4$ with $T=1000K$.

The test is designed \cite{BruHol01} to show the drawback of the otherwise accurate $\mathcal{M}_1$ model and how $\mathcal{M}_2$ overcomes this problem.

In the figures we depict the $0^\text{th}$ moment of the distribution $\psi$, i.e.\ the radiative energy $E$, depending on $x$. See figure \ref{fig:sim_1} for the results. Note that the $\mathcal{M}_1$ approximation produces an unphysical shock and is qualitatively wrong. The $\mathcal{M}_2$ model fixes this but does not have the same accuracy as the spherical harmonics methods $P_1$ and $P_3$.

\begin{figure}[htp!]%
\includegraphics[width=1\textwidth]{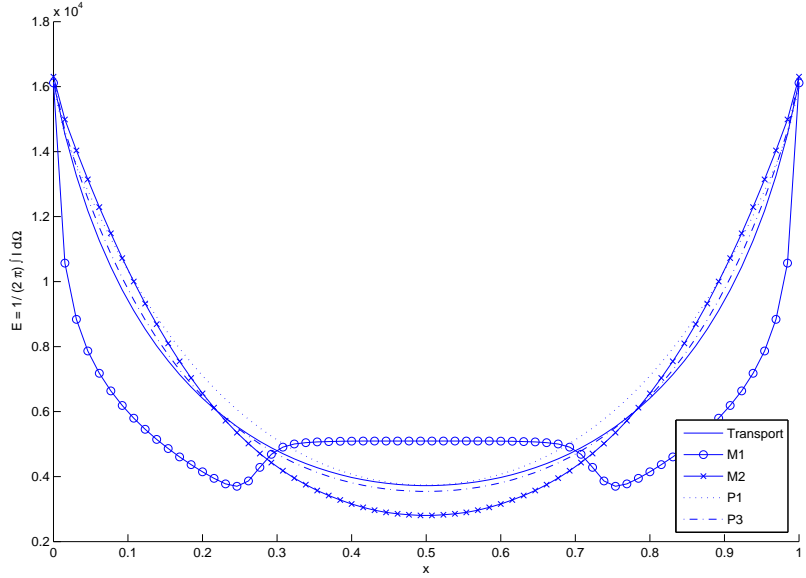}%
\caption{Energy distribution for two entering rays; $\sigma = 0.01$, $\kappa = 2.5$, initial temperature $T_0 \approx 0K$, boundary temperatures $T_l = T_r =1000K$.}
\label{fig:sim_1}%
\end{figure}

\subsubsection{Second test-case}
Here the medium is infinite, $x\in\mathbb{R}$. Initially the medium is at temperature $T=0K$ for $t=0$. A uniform source is applied according to
\begin{align}
 Q = \begin{cases} 1 \quad -x_0 \leq x \leq x_0, \quad 0\leq t\leq t_0 \\ 0 \quad \text{otherwise} \end{cases}
\end{align}
with $x_0 = 0.5$ and $t_0 = 10$. We set the parameters as $\sigma = 0.5$ and $\kappa = 0.5$.

See figure \ref{fig:sim_2} for a comparison to the benchmark. One can see a distinctive dent in all 4 approximations at $x = 0.5$ where the source is discontinuous. The difference between the methods gets smaller as they tend towards zero as the time increases. The $\mathcal{M}_2$ model considerably improves in accuracy compared to $\mathcal{M}_1$ and the spherical harmonics approximations.

\begin{figure}[htp!]%
\includegraphics[width=1\textwidth]{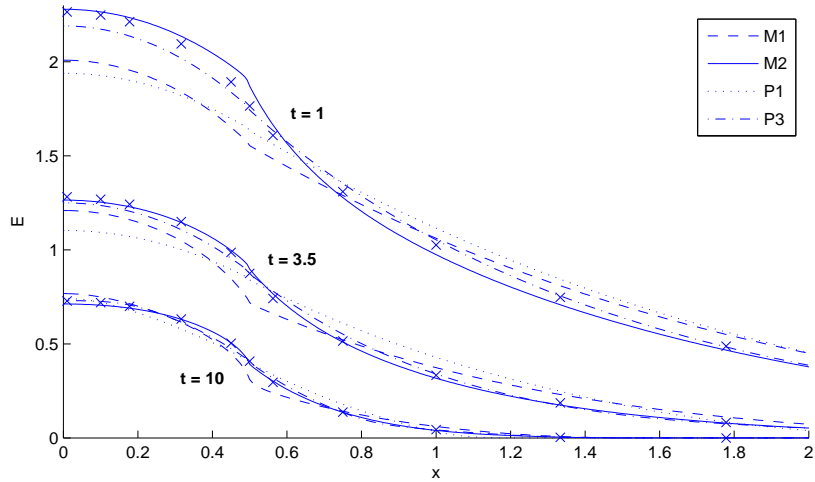}%
\caption{Energy distribution in an infinite domain $x \in \mathbb{R}$; source in $0 \leq x \leq 0.5$ switched on for finite time $0 \leq t \leq 10$. $\sigma = 0.5$, $\kappa = 0.5$, initial temperature $T_0 \approx 0K$.}
\label{fig:sim_2}%
\end{figure}

\section{Conclusion}
Moment methods are an adequate means of treating radiative transfer problems. Utilizing those methods it is possible to approximate the transfer (integro-differential) equation by a relatively simple system of partial differential equations. Still, especially highly anisotropic radiation necessitates a careful choice of the closure, as many models have problems when the diffusive regime is left and the transition regime is entered.

Minimum entropy models are able to handle radiative non-equilibria and provide good approximations that are computationally cheap. The $\mathcal{M}_2$ model corrects the $\mathcal{M}_1$ model's biggest incapability: its inaptitude to distinguish between radiative equilibrium and two identical beams travelling in opposite directions, i.e.\ it cannot handle a zero net flux. By the employment of the second order method no physical shocks occur and the solution always remains positive as opposed to spherical harmonics approximations. Also, the propagation of information is always slower than the speed of light, a fact that does not hold, for instance, in diffusion models. 

Reasonable next steps would be the extension into more spatial dimensions. Then, the moments would cease to be scalar quantities and become tensors. In that framework the analysis of the moment methods and the minimum entropy closure would be much more involved. Nevertheless, only a realistic geometry would allow for modeling inhomogeneities that are non-planar, which are needed to provide accurate methods for radiation therapy especially around tissue including air cavities such as the lungs and the sinus passage. Furthermore it would be very interesting to develop new, or modified, algorithms in order to calculate higher order minimum entropy closures. This would be supported by a more detailed study of the boundary surfaces of higher order Eddington factors.

Additionally, it would be interesting to study higher order minimum entropy methods in different contexts, e.\ g.\ plasma physics.

\section*{Acknowledgments}
This work was supported by German Academic Exchange Service DAAD under grant D/0707534, and by the German Research Foundation DFG under grant KL 1105/14/2.

\bibliographystyle{amsplain}
\bibliography{literatur}

\appendix
\section{Formulae for the $\mathcal{M}_2$ Eddington Factor}
\subsection{The $\mathcal{M}_2$ Eddington factor}
\label{sec:appendix_shi_boson}
\begin{align}
\label{eq:appendix_shi_boson}
\chi (a, b) = &-\frac{1}{8}\frac{\exp(-a)}{b^3 \sqrt{\pi} ( \text{erf} \left ( \frac{1}{2} \frac{a-2b}{\sqrt{-b}} \right ) - \text{erf} \left ( \frac{1}{2} \frac{a+2b}{\sqrt{-b}} \right ) )} \nonumber \\
& \cdot \left ( \exp \left ( \frac{1}{4} \frac{a^2+b^2}{b} \right ) ((-b)^{\frac{3}{2}}( 8 - 4a - 8b) + 2a^2\sqrt{-b}) \right.\nonumber \\
& + a\exp(a) \sqrt{\pi} (a^2-6b) \left ( \text{erf} \left ( \frac{1}{2} \frac{a-2b}{\sqrt{-b}}\right ) + \text{erf} \left ( \frac{1}{2} \frac{a+2b}{\sqrt{-b}}\right ) \right )\nonumber \\
& - 2 \exp \left ( \frac{1}{4} \frac{8ab+a^2+4b^2}{b}  \right ) \left ( (-b)^\frac{3}{2} (4 + 2a - 4b) - 2a^2\sqrt{-b} \right ) 
\end{align}
where we defined the error function erf
\begin{equation}
\text{erf}(x) := \frac{2}{\sqrt{\pi}} \int \limits_0^x \exp(-t^2) dt.
\end{equation}
$a$ and $b$ are the Lagrange-multipliers and have to be determined from the set of constraints $\langle \psi \cdot m \rangle = E$, where $E$ is a vector of prescribed moments.

\subsection{Rational fit to the $\mathcal{M}_2$ Eddington factor}
\label{sec:appendix_shi_fit}
A rational fit to equation (\ref{eq:appendix_shi_boson}) is given by
\begin{align}
\chi_{\text{fit}}(a, b) &= a\cdot   \frac{ k_1 +  k_2\cdot  b - k3\cdot b^2 + k_4\cdot b^3 + k_5\cdot b^4}{k_6 -  k_7\cdot b} \nonumber \\
&+ a^3\cdot (1-b) \frac{(k_8 +  k_9\cdot a^2)\cdot (k_{10} - k_{11}\cdot b + k_{12}\cdot b^2 - k_{13}\cdot b^3)}{k_{14} + k_{15}\cdot b}
\end{align}
with
\begin{table}[h!t]
\centering
\begin{tabular}{|c|c|}
\hline
Parameter & Value \\
\hline
$k_1$ & $0.104107226623813765\cdot 10^{-5}$ \\
$k_2$ & $0.878240820142032813\cdot 10^{-4}$ \\
$k_3$ & $0.834291539331555326\cdot 10^{-4}$\\
$k_4$ & $0.182387623676748914\cdot 10^{-5}$\\
$k_5$ & $0.190298302202491296\cdot 10^{-4}$\\
$k_6$ & $0.372238009572563292\cdot 10^{-4}$\\
$k_7$ & $0.109376202668751704\cdot 10^{-4}$\\
$k_8$ & $-0.397189030190850628\cdot 10^{-3}$\\
$k_9$ & $0.289736270814778120\cdot 10^{-3}$\\
$k_{10}$ & $-3207.68652563260184$\\
$k_{11}$ & $2.77365687089333$\\
$k_{12}$ & $2.63485641674103$\\
$k_{13}$ & $0.576015910414080\cdot 10^{-1}$\\
$k_{14}$ & $-1.17560068853509$\\
$k_{15}$ & $0.485606117598845422$\\
\hline
\end{tabular}
\caption{Parameters of the rational fit to the $\mathcal{M}_2$ Eddington factor}
\label{tab:rational_fit_params}
\end{table}

\end{document}